\newenvironment{blist}
        {
\begin{list}{$\bullet$}
        {
\setlength{\partopsep}{0.00in} 
\setlength{\topsep}{-0.02in} 
\setlength{\itemsep}{0.04in}
\setlength{\parsep}{0.00in}
\setlength{\leftmargin}{1cm}
        }
}
{\end{list}}
\newtheorem{thm}{Theorem}[section] 
\newtheorem{defi}{Definition}[section] 
\newtheorem{lem}{Lemma}[section]
\newtheorem{cor}{Corollary}[section] 
\newenvironment{proof}{\begin{trivlist} 
        \item[]\hspace{0cm}{\bf Proof.} 
\hspace{0cm} }{\hfill $\square$ 
                        \end{trivlist}}
\newenvironment{sketchproof}{\begin{trivlist} 
        \item[]\hspace{0cm}{\bf Sketchproof.} 
\hspace{0cm} }{\hfill $\square$ 
                        \end{trivlist}}
\def\bibfmta#1#2#3#4{{#1,} {#2}, {\em #3}, #4.}
\begin{document}

\title{\bf Reliable Self-Stabilizing Communication  
for Quasi Rendezvous}

\author{Colette JOHNEN $^a\quad$ Ivan LAVALL\'EE $^b\quad$  
Christian LAVAULT $^c\,$\thanks{Corresponding author:  
LIPN, CNRS ESA 7030, Universit\'e Paris-Nord, 
99, av. J.-B. Cl\'ement 93430 Villetaneuse, France. Email: 
lavault@lipn.univ-paris13.fr} \\
$^a$\ {\small \sl LRI-CNRS Universit\'{e} Paris-Sud} \\
$^b$\ {\small \sl LRIA, Universit\'{e} Paris 8} \\
$^c$\ {\small \sl LIPN-CNRS, Universit\'e Paris-Nord}
}
\date{\empty}
\maketitle
\vspace{-.5cm}

\begin{abstract}
The paper presents three self-stabilizing protocols for basic 
fair and reliable link communication primitives. We assume a 
link-register communication model under read/write atomicity, 
where every process can read from but cannot write into its 
neighbours' registers. The first primitive guarantees that 
any process writes a new value in its register(s) only after 
all its neighbours have read the previous value, whatever the 
initial scheduling of processes' actions. The second primitive 
implements a ``weak rendezvous'' communication mechanism by 
using an alternating bit protocol: whenever a process consecutively 
writes $n$ values (possibly the same ones) in a register, each 
neighbour is guaranteed to read each value from the register 
at least once. On the basis of the previous protocol, the 
third primitive implements a ``quasi rendezvous'': in words, 
this primitive ensures furthermore that there exists exactly 
one reading between two writing operations

All protocols are self-stabilizing and run in asynchronous 
arbitrary networks. The goal of the paper is in handling 
each primitive by a separate procedure, which can be used 
as a ``black box'' in more involved self-stabilizing protocols.

\smallskip \noindent 
{\it Keywords}:\ Self-stabilization, communication primitive, 
rendezvous, read/write atomicity, liveness
\end{abstract}

\newpage
\section{Introduction}
A {\em self-stabilizing} system which is started from an arbitrary 
initial configuration, regains its consistency and demonstrates 
legal behaviour by itself, without any outside intervention. 
Consequently, a self-stabilizing system needs not be initiated 
to any configuration, and can recover from {\em transient faults}. 
More precisely, it can recover from {\em memory corruptions} 
and copes with processors or channels crashes and recoverings 
(i.e., dynamic networks).

\subsection{The Communication primitives}
In the paper, we present fair and reliable self-stabilizing 
communication primitives in the link-register model. The communication 
between two neighbours ($A$ and $B$) is carried out by the use of 
two sets of {\em communication registers} called {\em registers}: 
$r_{AB}$ and $r_{BA}$. Process $A$ can write in the registers of 
$r_{AB}$ and each process $A$ and $B$ can read from the registers 
of $r_{AB}$.The registers support {\em read} and {\em write} atomic 
operations. For example, let $\Sigma=\{a,b,c,\epsilon\}$ be an alphabet 
and $w=aaabbbbcc=a^3b^4c^2$ a sequence of valuewritten by $A$ into 
$r_{AB}$. The communication primitives in their very first basic 
form {\em do not ensure more} than e.g.: $a^*b^*c^*$ is eventually 
read by $B$. 

\medskip 
The first presented primitive guarantees that any process $A$ writes 
a new value in its register(s) $Write_{AB}$ only after its neighbour 
$B$ has read the previous value. Notice that when $A$ writes $n$ 
times the same value consecutively in the register $Write_{AB}$, 
the primitive ensures that $B$ eventually copies this value at 
least once. For example, given $\Sigma$ and $w$ as above, the 
first primitive {\em only guarantees} that e.g., $aa^*bb^*cc^*$ 
is eventually read by each neighbour: each symbol in $w$, ($a$, $b$ 
and $c$) is read at least once, whatever the number of occurrences. 
This primitive simulates self-stabilizing reliable message-passing 
communication in the link-register asynchronous model. It guarantees 
that a message, that is the value of the register $Write$, is 
eventually received: the value is eventually known from the 
neighbours' process.

\medskip
The {\em rendezvous} mechanism (as defined in~\cite{hoa}) 
synchronizes communications, i.e., the $write$ and $read$ 
operations are performed in and from the same register. 
When Process $A$ writes a value in its register $Write_{AB}$, 
it cannot perform any other action until process $B$ has 
completed a $read$ operation from the register $Write_{AB}$.

The second communication primitive is a self-stabilizing 
``weak rendezvous''. After performing a $write$ operation in 
its register $Write_{AB}$, the process $A$ cannot perform but 
some specific actions, as long as process $B$ has not completed 
a $read$ operation from $Write_{AB}$. Therefore, if $A$ 
consecutively writes $n$ values (possibly the same ones) in 
the register $Write_{AB}$, the primitive guarantees that $B$ 
eventually copies each value at least once. If $A$ writes $n$ 
times the same value in $Write_{AB}$, the value will be read 
at least $n$ times. As an example, given $\Sigma$ and $w$ as 
above, the second primitive {\em at least guarantees} that e.g., 
$a^3a^*b^4b^*c^2c^*$ is eventually read by each neighbour: 
each symbol in $w$ ($a$, $b$ and $c$) is read at least the number 
of times the symbol occurs in $w$ (but any symbol may be read 
strictly more than its number of occurrences).

The third self-stabilizing communication primitive performs 
a quasi synchronization. It is a ``quasi rendezvous'' mechanism 
and requires that between two $write$ operations performed by 
the process $A$ in $Write_{AB}$, the process $B$ cannot perform 
but one and only one {\em read} operation from $Write_{AB}$. 
Therefore, if $A$ writes $n$ consecutive times the same value 
(possibly the same one in each row) in the register $Write_{AB}$, 
the primitive guarantees that $B$ will copie each of the $n$ values 
exactly one time, once the system is stabilized. For example, 
given again $\Sigma$ and $w$ as above, the third primitive 
{\em does ensures} that {\em exactly} $a^3b^4c^2$ is eventually 
read by each neighbour: each symbol in $w$ ($a$, $b$ and $c$) 
is read exactly the number of times it occurs in $w$.

\medskip
Each such primitive may prove useful as a communication ``black box'' 
in designing more involved distributed self-stabilizing protocols.

\subsection{Related Works and Results} \label{related}
A deterministic self-stabilizing ``balance-unbalance'' mechanism 
on two processes systems under read/write atomicity is presented 
in~\cite{dim1} and in~\cite{dim2}. The two processes are not 
executing the same code. The one executes the balance code: 
when both processes have the same color, it changes color. 
The other executes the unbalance code: when both processes 
have not the same color, it changes color.
In~\cite{dim1}, this mechanism is used to guarantee that each 
process has a mutual exclusion access to a critical section, 
and in~\cite{dim2}, it is used to ensure synchronization 
of the processes. In both cases, this mechanism provides strong 
synchronization: between two ``actions'' of a process, the other 
process cannot perform but only one ``action''.
In~\cite{dim1,dim2}, the two processes protocol is used to design
a mutual exclusion algorithm (global synchronization) on tree 
networks. As claimed in~\cite{dim1,dim2}, the balance-unbalance 
mechanism cannot be extended to any network topology, since there 
exist no deterministic self-stabilizing synchronization protocols 
in uniform arbitrary networks. On the other hand, a self-stabilizing 
synchronization on unidirectional rings is provided in~\cite{bgj} 
through the deterministic token circulation mechanism: 
between two actions of a process its neighbours cannot perform 
but only one action.

Any self-stabilizing reset protocol~\cite{AG94,AKY90,APV91}
can be combined with the protocol in~\cite{A85} to design 
a self-stabilizing synchronizer. General self-stabilizing 
synchronizers are presented e.g. in~\cite{AV91,AKMPV93,V94}. 
Global self-stabilizing synchronizers for tree
networks are also proposed in~\cite{dim2,ABDT98,BDPV98}.
A self-stabilizing local synchronizer, that synchronizes 
each node in a tree network with its neighbours is presented 
in~\cite{JADT99}.
In the recent literature, several communication problems in the 
message-passing model have been addressed. A self-stabilizing 
communication protocol for two-way handshake is presented 
in~\cite{gom}, and a self-stabilizing version of the alternating-bit 
protocol is given in~\cite{afb}. 
In~\cite{anh}, Anagnostou and Hadzilacos present a self-stabilizing 
data link protocol under the read/write atomicity model such that, 
between two $write$ operations in the register, only one $read$ 
operationfrom that register is performed. However, no proof of the 
protocol is given in their paper. By contrast, our last two primitives 
use the alternating-bit mechanism, and since the two bits values must 
begin with the same value 0, our algorithm in section~\ref{qrv} is 
twice as fast as in~\cite{anh}.

\medskip
Section~\ref{model} describes our model with the basic assumptions. 
In Section~\ref{princ}, we present the general principle of our 
solution for a two processes system. The generalization to $n$ 
processes in arbitrary networks yields the {\em Read Checking} 
self-stabilizing protocol, which is presented in Section~\ref{proto}. 
Section~\ref{proper} is devoted to the proof of liveness 
and correctness of the Read Checking protocol. Section~\ref{wrv} 
presents the weak rendezvous protocol and Section~\ref{qrv} describes 
our quasi rendezvous protocol. Finally, the paper ends with few 
concluding remarks.

\section{Model and Requirements} \label{model}
Although distinct from the one described in~\cite{dim1}, 
our model relies on close requirements and assumptions, 
especially in terms of communication (e.g., link registers, 
read/write atomicity, etc.). A distributed system consists 
of $n$ processes denoted $A$, $B$, etc. Each process resides 
on a node of the system's {\em communication graph } 
(or {\em network }). Two processes which reside on two 
adjacent nodes of the network are called {\em neighbours}. 
We model distributed self-stabilizing systems as a set of 
(possibly infinite) state machines called processes. 
Each process can only communicate with the subset of processes 
consisting of its neighbours. We assume a {\em link-register} 
communication model under read/write atomicity~\cite{dim1}. 
Each link between any two neighbours $A$ and $B$ is composed 
of two pairs of registers\footnote{In our model, the registers 
are physical (hardware) devices. Reading from or writing 
in {\em one} register is an atomic action according to the 
design of the  microprocessor.}, denoted $(Write_{AB},Read_{AB})$ 
and $(Write_{BA},Read_{BA})$, and belonging to $A$ and $B$, 
respectively. Process $A$ can read from the two registers 
of $B$, $Write_{BA}$ and $Read_{BA}$, but cannot write into 
them. Similarly, process $A$ cannot write but in its own 
registers, $Write_{AB}$ and $Read_{AB}$, to communicate 
with $B$.

A {\em configuration} of the system is the vector of states 
of all processes. The state of a process is the value of its 
internal variables and the contents of its registers.

\subsection{Schedulers, Demons and Computation}
An {\em atomic step} is the ``largest'' step which is guaranteed 
to be executed uninterruptedly. A process uses {\em read/write} 
atomicity if each atomic step contains either a single read 
operation or a single write operation but not both. The system 
behaviour is modelled by the interleaving model in which processes 
are activated by a scheduler. The scheduler is regarded as a 
{\em fair} adversary: in a self-stabilizing system, all possible 
fair executions are required to converge to a correct behaviour. 
A fair scheduler shall eventually activate any process which may 
continuously perform an action. A common scheduler activates 
either processes one by one (central demon) or subsets of processes 
(distributed demon). Under read/write atomicity, both central and 
distributed schedulers/demons are ``equivalent'', in the sense that 
any execution performed under a distributed scheduler may be 
simulated by a central one. A process 
which can perform an atomic step into a configuration $c$, 
is said to be {\em enabled} at $c$. During a {\em computation step}, 
one or more processes execute an atomic step. A {\em computation} 
of a protocol ${\cal P}$ is a sequence of configurations 
$c_1,c_2,\ldots$ such that, for $i=1, 2, \ldots$, the configuration 
$c_{i+1}$ is reached from $c_i$ by one computation step. 
A computation is said to be {\em maximal} either if the sequence 
is infinite, or if it is finite and no process is enabled in the 
final configuration. A {\em problem} is a predicate defined on 
computations.

\subsection{Self-Stabilization}
The protocol ${\cal P}$ is {\em self-stabilizing} for the 
problem $\Pi$ if and only if there exists a predicate ${\cal L}$ 
defined on configurations such that:
\begin{blist}{}
\item 
all computations reach a configuration that satisfies $\cal L$ 
({\bf convergence}); 
\item  
all computations, from $\cal L$, satisfy problem $\Pi$ 
({\bf correctness}). 
\end{blist}

\medskip
Notice that the maximal computations of a self-stabilizing 
protocol may be finite; in that case the algorithm is said 
to be {\em silent}~\cite{dgs}. Most self-stabilizing algorithms 
which build spanning tree or elect a leader are silent~\cite{hc}.
Self-stabilizing protocols offers full and automatic 
protection against all transient process failures, no matter 
how much the data have been corrupted: e.g., all registers 
values may be fully corrupted. 
 
So, whatever the registers values, our protocols secure the 
transfer of information between any two pair of neighbours 
after a ``certain delay time''. 

\section{Principle of the Solution} \label{princ} 
Let a two processes system, consisting in two neighbouring
processes $A$ and $B$ equipped with their two pairs of registers 
(see Section~\ref{model}). The principle of the solution for 
$A$ relies on the following basic idea. Under read/write atomicity, 
$A$ systematically keeps reading the value from $Write_{BA}$ and 
copies out this value in $Read_{AB}$ (i.e., $A$ reads the message 
sent by $B$ and copies out the message in $Read_{AB}$ to inform 
$B$ that its message is received). Besides, $A$ systematically 
keeps reading the value from $Read_{BA}$ and compares it to 
the value of $Write_{AB}$. When both values are equal, $A$ 
finds out that $B$ somehow read that value (i.e., the information 
has been transmitted), So it can stop reading and can write 
again in $Write_{AB}$.

\noindent \hrulefill\-\hrulefill

\noindent {\bf while} {\em true} {\bf do} \\
\hspace*{1cm} $A$ writes in $Write_{AB}$ \\
\hspace*{.5cm} {\bf repeat} \\
\hspace*{1cm} $A$ reads from $Write_{BA}$ ; \\
\hspace*{1cm} $A$ writes out the value of $Write_{BA}$ into 
$Read_{AB}$ ; \\
\hspace*{1cm} $A$ reads from $Read_{BA}$ \\
\hspace*{.5cm} {\bf until}\ \, $Read_{BA}=Write_{AB}$ \\
{\bf endwhile}

\smallskip
\begin{center}
{\bf Fig.~1.}\ {\em The basic 2-processes protocol for $A$.}
\end{center}
\vspace{-3mm}

\noindent \hrulefill\-\hrulefill

\bigskip
After $A$ has written a new value in $Write_{AB}$, 
$A$ becomes ``weakly locked'' until $B$ receives the message 
($Read_{BA}=Write_{AB}$). When $A$ is inside the {\bf repeat} 
loop, it can only perform some actions, for instance, $A$ 
cannot write in its register $Write_{AB}$. 

\smallskip
In a self-stabilizing setting, $A$ may then proceed with the
execution of its own code, since the protocol makes it sure that 
$B$ did read the value from $Write_{AB}$ (at least, it results 
from the protocol that $A$ knows for sure that the values in 
$Read_{BA}$ and $Write_{AB}$ are identical). The corresponding 
code sequence for $B$ is of course fully symmetrical to the 
basic protocol for $A$: the roles of $A$ and $B$ (i.e. the 
registers' names) have simply to be inverted within the above 
protocol in Fig.~1. Thus, a two-way communication is established 
between $A$ and $B$.
 
\section{The Protocol in Arbitrary Networks} \label{proto}
The generalization of the above protocol to a system of $n>2$ 
processes constituting an arbitrary network is now easy. We 
still assume each pair of neighbouring processes in the network 
to be equipped with its two pairs of registers on their common link. 
In order to simplify the use of variables, we call {\em ``message''} 
the ``information'' exchanged between neighbours during the 
execution of the protocol.

A protocol which stabilizes on a single link may not generalize 
to a protocol which stabilizes on all links of a (finite) network, 
e.g. by having each process execute the ``link-protocol'' 
in a round robin manner on each individual link adjacent to it. 
Taking the $n$-processes system pair by pair may cause a deadlock: 
for all $i\in \{0,\ldots,n-1\}$, $A_i$ may be waiting for 
$A_{i+1}$ to read from $Write_{A_iA_{i+1}}$, with $A_n=A_0$.

\subsection{Notation} \label{nota}
\hfill

{\bf Write register for $\mathbf{A}$:} $Read_{AB_i}$ is the register 
in which $A$ writes the value of the last message read by $A$ 
and sent by $B_i$.

\smallskip
{\bf Read register for $\mathbf{A}$:} $Write_{B_iA}$ is the register 
in which $B_i$ writes the message to be transmitted to $A$, and 
$Read_{B_iA}$ is the register in which $B_i$ writes the value of 
the last message read by $B_i$ and sent by $A$.

\smallskip
{\bf Write and read register for $\mathbf{A}$:} $Write_{AB_i}$ 
is the register in which $A$ writes the value of the message 
which is to be sent to its $i\/$th neighbour $B_i$.
 
\smallskip
{\bf Function {\em get}$_{\mathbf{i}}$ for $\mathbf{A}$:} 
{\em get}$_i$ takes no argument and returns the next message 
to be sent to the $i\/$th neighbour of $A$ ({\em get}$_i$ is 
a helper function added to $A$). 

\subsection{The Read Checking Protocol} \label{rcproto}
On the same assumptions for the model (read/write atomicity) and 
for the scheduler's actions (rules of activations of processes 
and fairness) as given in Section~\ref{model}, the specification 
of the self-stabilizing Read Checking protocol in arbitrary 
networks for a process $A$, with neighbours $B_i$'s 
$(1\le i\le N_A)$, is as follows.

\noindent \hrulefill\-\hrulefill

\smallskip
\noindent {\bf constant}\ \, $N_A$\hspace{.8cm} : the number of 
neighbours of $A$ ; \\
\noindent {\bf var}\ \, $s_i$\hspace*{1.9cm} : message to be
sent to the $i\/$th neighbour of $A$ ; \\
\hspace*{.7cm} $r_i$\hspace*{2cm} : message sent from the $i\/$th
neighbour of $A$ ; \\
\hspace*{.7cm} $val_i$\hspace*{1.7cm} : value of the last message
sent from $A$ and read by the $i\/$th neighbour of $A$ ;

\medskip
\noindent {\bf while} {\em true} {\bf do} \\
\hspace*{.5cm} {\bf for} $\;i=1\;$ {\bf to} $\;N_A\;$ {\bf do} \\
\hspace*{2cm} {\bf write}$(Write_{AB_i},get_i)$ ; \\
\hspace*{.5cm} {\bf endfor} \\
\hspace*{.5cm} {\bf repeat} \\
\hspace*{1cm} {\bf for} $\;i=1\;$ {\bf to} $\;N_A\;$ {\bf do} \\
\hspace*{2cm} $r_i\gets$ {\bf read}$(Write_{B_iA})$ ; \\
\hspace*{2cm} {\bf write}$(Read_{AB_i},r_i)$ ; \\
\hspace*{2cm} $val_i\gets$ {\bf read}$(Read_{B_iA})$ ; \\
\hspace*{2cm} $s_i\gets$ {\bf read}$(Write_{AB_i})$ ; \\
\hspace*{1cm} {\bf endfor} \\
\hspace*{.5cm} {\bf until}\ \, \,( $\forall i\in [1,N_A]\ \; 
val_i = s_i$ ) \\
{\bf endwhile}

\smallskip
\begin{center}
{\bf Fig.~2.}\ {\em The Read Checking protocol for $A$.}
\end{center}
\vspace{-.3cm}
\noindent \hrulefill\-\hrulefill

\section{Proof of the Read Checking Protocol} \label{proper}

\subsection{Proof of Liveness} \label{live}




\begin{lem}
Whatever the execution, every process performs an infinite number 
of actions.
\end{lem}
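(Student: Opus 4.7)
The plan is to argue directly from the code structure of Fig.~2 together with the fairness assumption on the scheduler stated in Section~\ref{model}. The key observation is that the Read Checking protocol contains no blocking primitive: every instruction in the body of the outer \textbf{while} loop is either a \textbf{read}, a \textbf{write}, a call to \emph{get}$_i$, an assignment, or the evaluation of a Boolean guard (the \textbf{for} counter or the \textbf{until} predicate), each of which is an atomic step that can be performed from any configuration regardless of the values held by the registers of $A$ or of its neighbours.

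First I would formalise this by walking through the pseudocode and checking, control-point by control-point, that a process $A$ in local control state $\ell$ always has a next atomic step it is ready to execute: inside the initial \textbf{for}, the write $\mathbf{write}(Write_{AB_i},get_i)$ is unconditionally enabled; inside the body of the \textbf{repeat}, each of the four register operations on $r_i$, $Read_{AB_i}$, $val_i$, $s_i$ is unconditionally enabled; and the evaluation of the \textbf{until} condition merely compares local variables and either re-enters the \textbf{repeat} body or falls back to the outer \textbf{while}. Hence $A$ is enabled at every reachable configuration.

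Next I would invoke the fairness of the scheduler (central or distributed, equivalently under read/write atomicity, as recalled in Section~\ref{model}): a fair scheduler eventually activates any process that is continuously enabled. Since $A$ is enabled at \emph{every} configuration, it is in particular continuously enabled along any execution, and so it is activated infinitely often; each activation yields an atomic step, i.e.\ an action of $A$. The argument is uniform in $A$, hence applies to every process of the network.

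The only subtle point one might worry about is whether a process could be trapped forever inside the \textbf{repeat} loop (because the guard $\forall i,\, val_i = s_i$ never becomes true), and whether this should count against the lemma. I would stress that this is harmless for the present statement: even if the \textbf{repeat} never exits, each pass through its body still contains several atomic register operations, so infinitely many actions are performed by $A$ regardless. Thus the lemma follows without any reference to the semantics of neighbours' behaviour; the harder liveness statement (that the \textbf{repeat} loop is eventually left) is a separate property, to be established in the subsequent lemmas of Section~\ref{live}.
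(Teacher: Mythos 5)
Your proof is correct and follows essentially the same route as the paper's: every process is always enabled because no instruction of the protocol can block under read/write atomicity, and fairness of the scheduler then forces each process to be activated infinitely often. Your explicit walk through the control points and the remark that remaining forever in the \textbf{repeat} loop is harmless for this particular lemma are just more detailed renderings of the paper's one-paragraph argument.
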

\begin{proof} Read/write atomicity ensure that each process is always 
enabled. Therefore, every execution is infinite (every configuration 
is deadlock-free), and in each configuration that is reached every 
process can perform an action (fair scheduler). The scheduling of 
processes' actions is fair: if a process can always execute an action, 
then the process finally performs an action. Thus, by fairness, every 
process is performing an infinite number of actions, whatever the 
execution.
\end{proof}

\begin{lem} \label{allow}
Let A be a process with its program counter in the {\bf repeat} loop 
and let B be a neighbour of A. Whatever the current configuration 
and the execution, the processes system executing the protocol either 
eventually reaches a configuration in which B allows A to write, 
or A exits the {\bf repeat} loop.
\end{lem}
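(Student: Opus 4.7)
The plan is to argue by contradiction. Suppose that after the starting configuration, $A$ never exits its \textbf{repeat} loop and that the condition $Read_{BA} = Write_{AB}$ (which is how the protocol expresses "$B$ allows $A$ to write", as already noted after Fig.~1) never holds in the resulting suffix. The first key observation is that while $A$'s program counter stays inside the \textbf{repeat} loop, none of the four instructions of the inner \textbf{for} loop writes into $Write_{AB_i}$, and by the link-register ownership convention of Section~\ref{model} no other process can write into $A$'s register $Write_{AB}$. Consequently $Write_{AB}$ keeps a constant value $v$ throughout the suffix considered.

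Next, I apply Lemma~1 to $B$: $B$ performs infinitely many atomic actions. Inspect $B$'s code, which is the symmetric instance of Fig.~2. It consists of an outer \textbf{while} loop whose body is a finite \textbf{for} writing to $B$'s registers, followed by a \textbf{repeat} block whose body iterates a \textbf{for} over $B$'s neighbours. The pre-\textbf{repeat} phase is finite, and even if $B$ were to exit its \textbf{repeat} block and re-enter it, each traversal adds only finitely many actions. Since $B$ takes infinitely many steps, the body of $B$'s inner \textbf{for} loop is entered infinitely often. Let $j$ be the index under which $A$ appears among $B$'s neighbours; then the $j$-th iteration is executed infinitely often, and it performs in order the two atomic actions $r_j \gets \textbf{read}(Write_{AB})$ and $\textbf{write}(Read_{BA},\,r_j)$.

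Pick any such $j$-th iteration whose \textbf{read} step occurs strictly after the start of the suffix. By the invariance established in the first paragraph, the atomic read returns $v$, so $B$'s local variable $r_j$ holds $v$. No other process may modify a local variable of $B$, hence $r_j$ still equals $v$ when $B$ later executes the matching \textbf{write}; at that point $Read_{BA}$ becomes $v = Write_{AB}$, contradicting the assumption that this equality never holds. Hence either $A$ eventually exits the \textbf{repeat} loop, or $B$ allows $A$ to write.

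The main obstacle I expect is purely one of bookkeeping: one has to justify carefully that, between $B$'s read of $Write_{AB}$ and $B$'s subsequent write into $Read_{BA}$, nothing interferes with either the register $Write_{AB}$ (which uses the write-ownership rule plus the fact that $A$ is stuck in the \textbf{repeat} block) or the local variable $r_j$ (which is private to $B$). Once these invariants are spelled out, together with the control-flow fact that every entry into $B$'s \textbf{repeat} necessarily traverses the index $j$ assigned to $A$, the contradiction falls out immediately.
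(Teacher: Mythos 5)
Your proof is correct and follows essentially the same route as the paper's: assume for contradiction that $A$ never exits and $B$ never grants permission, note that $Write_{AB}$ is then frozen, invoke the liveness lemma to get infinitely many actions of $B$, and observe that a complete read-of-$Write_{AB}$/write-into-$Read_{BA}$ pair executed entirely after the starting configuration forces $Read_{BA}=Write_{AB}$. Your version is in fact a bit more careful than the paper's (which argues via explicit action counts $N_B$ and $4N_B$) about discarding a possibly stale first write by $B$ and about why the iteration indexed by $A$ recurs infinitely often.
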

\begin{proof}
Suppose $B$ never allows $A$ to write and $A$ never exits the 
{\bf repeat} loop. Then $A$ never changes the value in its register 
$Write_{AB}$. Under these conditions, updating its register 
$Read_{BA}$ is a writing permission given to $A$ by $B$ 
(since between the reading of the value from the register 
$Write_{AB}$ and the writing of that value in $Read_{BA}$, 
the register $Write_{AB}$ does not change value).

Whatever the current configuration and the execution, if the program 
counter of $B$ is not within the {\bf repeat} loop, it takes $B$ less 
than $N_{B}$ actions to enter the {\bf repeat} loop. Once $B$ enters 
the loop, after $4N_{B}$ actions, it updates all its {\em Read} 
registers, and thus allows  $A$ to write.

Whatever the current configuration and the execution, if the program 
counter of $B$ is within the {\bf repeat} loop, it takes $B$ at least
$4N_{B}$ actions either to exit the loop, or to update its register 
{\em Read}$_{AB}$.

Whatever the execution, $B$ performs an infinite number of actions 
(by Lemma~\ref{allow}) and eventually, either $B$ allows $A$ 
to write, or $A$ exits the {\bf repeat} loop.
\end{proof}
\begin{defi} \label{def1}
Let A and B be two neighbouring processes. A is said to allow B 
to write iff Read$_{BA}=$ Write$_{AB}$. Let A be a process 
and let $N_A$ denote the number of neighbours of A ($N_A$ is 
the degree of A in the network).
\end{defi}

\begin{defi}
Let A and B be two neighbouring processes. The update of the register 
Read$_{AB}$ is the sequence of the two following actions performed 
by B: $r_i\gets$~{\bf read}$(Write_{AB})$~; {\bf write}$(Read_{BA},r_i)$.

A {\em wrong} writing is a write action in the register Read$_{BA}$ 
which is not performed within the context of an update.
(The {\em correct} writing into the register Read$_{BA}$ is a write 
action executed within the context of an update.)
\end{defi}

\begin{lem} \label{wrong}
After executing its first action, no process can perform a wrong writing.
\end{lem}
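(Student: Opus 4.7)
My plan is to prove the lemma by direct inspection of the code in Figure~2, combined with a simple program-counter argument. The key structural observation is that the statement \texttt{write}$(Read_{AB_i},r_i)$ is the \emph{only} point in $A$'s code at which $A$ writes into a register of the form $Read_{AB_i}$, and it is immediately preceded in the code by $r_i \gets \texttt{read}(Write_{B_iA})$. Therefore, whenever these two instructions are executed by $A$ as consecutive atomic steps, their concatenation is precisely an update of the corresponding register in the sense of the preceding definition.

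The argument would then proceed by a case split on $A$'s very first action. If the first action is not a write to some $Read_{AB_i}$, then the program counter has advanced into a region from which reaching the write instruction again requires first executing the preceding read, so every future write of $Read_{AB_i}$ will take place within an update. If the first action \emph{is} a write to some $Read_{AB_i}$, then, because the initial value of $r_i$ is arbitrary, this first write may indeed be wrong; however, the lemma explicitly excludes this very first action from its conclusion, and after its execution the program counter deterministically advances past the write, so the same reasoning as in the first case applies from that moment on.

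The main obstacle, and the point I would write up with the most care, is formalising the claim that after its first atomic step the program counter of $A$ evolves strictly according to the text of the protocol. Under read/write atomicity, each atomic step performs exactly one instruction and the next program-counter value is determined by the code at the current position; no rule of the model allows the PC to jump to the write instruction without executing the instruction preceding it. Since no branching lies between $r_i \gets \texttt{read}(Write_{B_iA})$ and \texttt{write}$(Read_{AB_i},r_i)$, the write can be reached only by first executing the preceding read, which is exactly what is needed to conclude that every write performed after the first action is part of an update, and hence not wrong.
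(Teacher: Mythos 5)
Your proposal is correct and follows essentially the same argument as the paper: the only possible wrong writing is the very first action, occurring when the initial program counter sits between the read of the \emph{Write} register and the write into the \emph{Read} register, and thereafter every such write is necessarily preceded by its read and hence lies within an update. Your explicit treatment of program-counter determinism under read/write atomicity just spells out what the paper's proof takes for granted.
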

\begin{proof} Process $A$ can perform at most one {\em wrong} writing, 
and it may only happen when initially its program counter is set up after 
reading from the {\em Write} register and before writing in the {\em Read}
register. Once this write action is executed, each write action of $A$ 
in a {\em Read} register is performed within the context of an update.
\end{proof}

\begin{lem} \label{stop}
Let A and B be two neighbouring processes. After B executes its first 
action, if B allows A to write, then only the writing of A in its
register Write$_{AB}$ may be able to cancel that permission.
\end{lem}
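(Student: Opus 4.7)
The plan is a direct case analysis on the single atomic step that first breaks the equality $Read_{BA} = Write_{AB}$ that defines the permission. Under read/write atomicity a single step alters at most one register, so any such step must be a write to either $Read_{BA}$ or $Write_{AB}$.

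The first subcase is immediate: only $A$ can write to $Write_{AB}$, so any step that changes $Write_{AB}$ is already an $A$-write of the form the lemma explicitly allows. The remaining subcase is a write to $Read_{BA}$, which only $B$ can perform. Here I would invoke Lemma \ref{wrong}: once $B$ has executed its first action, every write of $B$ into $Read_{BA}$ is performed within the context of an update, i.e.\ it is immediately preceded in $B$'s program order by the read $r_i \gets \textbf{read}(Write_{AB})$. Consequently, the value $B$ deposits into $Read_{BA}$ is precisely the value of $Write_{AB}$ read at that moment. Provided $A$ has not written to $Write_{AB}$ between $B$'s read and $B$'s subsequent write, $r_i$ still coincides with the current value of $Write_{AB}$, so after $B$'s action $Read_{BA}=Write_{AB}$ still holds and the permission is preserved. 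Hence no action of $B$ on its own can destroy the equality.

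The point I expect to be the main obstacle is the \emph{in-flight update} situation: at the configuration where the permission is asserted to hold, $B$'s program counter may already sit between the $\textbf{read}$ and the $\textbf{write}$ of an update, carrying a stale $r_i$ obtained from an earlier value of $Write_{AB}$. In that case, completing $B$'s update does write a value into $Read_{BA}$ that differs from the current $Write_{AB}$, apparently contradicting the claim. I would handle this by tracing the causal chain backwards: staleness of $r_i$ can only arise because $A$ wrote into $Write_{AB}$ between the instant of $B$'s $\textbf{read}$ and the current moment, so the ultimate and only possible trigger of the eventual loss of permission is still a writing of $A$ in $Write_{AB}$. This matches the cautious formulation ``may be able to cancel'' and lets me conclude that no event other than an $A$-write into $Write_{AB}$ can ever be responsible for cancelling the permission, as required.
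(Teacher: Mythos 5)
Your proposal is correct and follows essentially the same route as the paper: reduce to the two registers whose contents define the permission, dispose of writes to $Write_{AB}$ as the allowed case, and use Lemma~\ref{wrong} to argue that every write of $B$ into $Read_{BA}$ occurs inside an update and therefore preserves the equality unless $A$ has written to $Write_{AB}$ in the interim. Your explicit treatment of the in-flight update (a stale $r_i$ already read before the permission held) is exactly the caveat the paper compresses into its clause ``unless $A$ writes \dots during the updating process or after the last update,'' so the two arguments coincide.
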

\begin{proof} Nothing but writing into the register $Read_{BA}$ 
or into the register $Write_{AB}$ can cancel the writing permission. 
After $B$ executes its first action, from Lemma~\ref{wrong} there is 
no {\em wrong} writing anymore. Hence, any writing into the register 
$Read_{BA}$ is executed within the context of a register's update. 
This update is such that the permission remains given to $A$, unless 
$A$ writes into its register $Read_{BA}$ during the updating 
process or after the last update.
\end{proof}

\begin{thm} \label{exit}
Let A be a process. Whatever the execution, the system of processes 
which performs the protocol reaches a configuration in which 
A is not within the {\em \bf repeat} loop anymore.
\end{thm}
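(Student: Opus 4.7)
The plan is to argue by contradiction: assume $A$ remains inside the \textbf{repeat} loop forever, and then use Lemmas \ref{allow}, \ref{wrong}, and \ref{stop} to show that all neighbours eventually grant $A$ the writing permission \emph{simultaneously}, at which point $A$'s exit condition $(\forall i,\ val_i = s_i)$ is satisfied and $A$ must leave the loop.

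First I would record one crucial invariant: while $A$'s program counter is inside the \textbf{repeat} loop, $A$ performs no \textbf{write} action on any register $Write_{AB_i}$ (the only write into $Write_{AB_i}$ in the code appears in the outer \textbf{for} loop). Consequently $Write_{AB_i}$ is frozen at some value $w_i$ throughout the entire sojourn in the loop, and every reassignment $s_i\gets$~\textbf{read}$(Write_{AB_i})$ yields the same value $w_i$. Thus, for each neighbour $B_i$, ``$B_i$ allows $A$ to write'' (Definition \ref{def1}) is precisely the statement $Read_{B_iA}=w_i$, which will then make $val_i = s_i$ when $A$ next executes that iteration of the inner \textbf{for}.

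Next I would fix a neighbour $B_i$ and combine Lemmas \ref{allow} and \ref{stop}. By Lemma~\ref{allow}, either $A$ exits the loop (and we are done) or $B_i$ eventually reaches a configuration where it allows $A$ to write. By Lemma~5.1, $B_i$ has already performed its first action by some point of any infinite execution, so after that point Lemma~\ref{wrong} eliminates wrong writings by $B_i$, and Lemma~\ref{stop} applies: the only action that could henceforth cancel the permission is a write by $A$ into $Write_{AB_i}$, which by the invariant above does not occur while $A$ is in the \textbf{repeat} loop. So for each $i$, there is a time $t_i$ after which $Read_{B_iA}=w_i$ holds \emph{permanently}.

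Finally I would take $t:=\max_i t_i$; from time $t$ onward all neighbours simultaneously allow $A$ to write. By Lemma~5.1 $A$ continues to execute actions, so $A$ eventually completes a full pass of the inner \textbf{for} loop after time $t$. In that pass every $val_i$ is read as $w_i$ and every $s_i$ is read as $w_i$, so the \textbf{until} test $\forall i,\ val_i = s_i$ holds and $A$ exits — contradicting the standing assumption. The main subtlety, and the only real work, is the persistence step (paragraph three): one has to be careful that the transient corrupt initial state (possible wrong writings, program counter sitting between a \textbf{read} and a \textbf{write} of an update) does not reintroduce a cancellation of permission, which is exactly what Lemmas \ref{wrong} and \ref{stop} rule out once the first action of each process has fired.
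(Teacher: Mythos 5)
Your proof is correct and follows essentially the same route as the paper's: a contradiction argument in which Lemma~\ref{allow} forces each neighbour to eventually grant permission, Lemma~\ref{stop} (together with the fact that $A$ does not write into $Write_{AB_i}$ while inside the loop) makes that permission permanent, and hence all permissions eventually hold simultaneously, so the next full pass of the inner \textbf{for} loop satisfies the exit test. Your presentation (permanence per neighbour, then taking the maximum of the times $t_i$) is a slightly cleaner packaging of the paper's case analysis, but the decomposition and the key lemmas used are identical.
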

\begin{proof} Suppose $A$ remains within the {\bf repeat} loop 
forever; then $A$ never writes into its {\em Write} registers.
Every $4N _A$ actions, $A$ is checking out the loop exiting 
condition. Whatever the execution, process $A$ performs 
an infinite number of actions. Hence, $A$ checks out the 
{\bf repeat} loop exiting condition an infinite number of times. 
In particular, $A$ tests the exit condition an infinite number
of times after all its neighbours have already executed an
action.

If at some test all neighbours of $A$ allow its writing, 
then, at the next test, all its neighbours keep on giving $A$ 
permission to write (by Lemma~\ref{stop}). In the meanwhile, 
$A$ has updated its variables $r_i$ and $s_i$, and when the test 
happens, the loop exiting condition is satisfied: $A$ exits 
the loop.

Process $A$ stays within the loop infinitely long in the case 
when, at each test, at least one neighbour does not allow its 
writing. Once a neighbour has allowed $A$ to write, this neighbour 
cannot withdraw permission from $A$. Therefore, there exists at 
least one neighbour of $A$ which never allows $A$ to write. 
Now from Lemma~\ref{allow}, this is impossible, and the theorem 
follows. Therefore, the protocol is deadlock-free.
\end{proof}

\begin{cor}
Let A be a process. Whatever the execution, A writes an infinite 
number of times into all its Write registers.
\end{cor}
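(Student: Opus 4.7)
The plan is to iterate Theorem~\ref{exit} infinitely often, using the structure of the outer \textbf{while} loop to extract a write into every \emph{Write} register between two successive exits from the \textbf{repeat} loop.

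First I would observe that the body of the outer \textbf{while} loop consists of a \textbf{for} loop that performs exactly one \textbf{write} action into each $Write_{AB_i}$ for $i=1,\dots,N_A$, immediately followed by entry into the \textbf{repeat} loop. Hence, every time $A$'s program counter moves from the outer \textbf{while} guard into the \textbf{repeat} loop, it has just executed one write into every one of its $N_A$ \emph{Write} registers. So it suffices to show that $A$ enters (equivalently, re-enters) the \textbf{repeat} loop infinitely often.

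Next I would argue this by induction on the number of entries. Consider any configuration $c$ in the execution. By Theorem~\ref{exit}, starting from $c$ the system reaches a later configuration $c'$ in which $A$ is no longer within the \textbf{repeat} loop. From $c'$, by Lemma~\ref{allow}'s ancestor observation that processes are always enabled and by fairness, $A$ performs the next $N_A$ actions of its code, namely the \textbf{for} loop, writing once into each $Write_{AB_i}$, and then re-enters the \textbf{repeat} loop at some configuration $c''$. Applying Theorem~\ref{exit} again from $c''$ produces the next exit, and so on. This yields an infinite sequence of configurations at which $A$ enters the \textbf{repeat} loop, and between any two consecutive such entries $A$ writes exactly once into each of its $N_A$ \emph{Write} registers.

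The only subtlety — which I do not expect to be a real obstacle — is making sure the repeated application of Theorem~\ref{exit} is legitimate: the theorem is stated for arbitrary configurations and arbitrary executions, so it applies to the suffix of the execution starting at each $c''$. Fairness (used already in the earlier lemmas) guarantees that between the exit of the \textbf{repeat} loop and the next re-entry, $A$ is indeed scheduled to perform the intervening $N_A$ write steps. Combining these facts, $A$ writes into each of its \emph{Write} registers infinitely often in every execution, which is the statement of the corollary.
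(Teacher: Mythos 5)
Your proof is correct and follows essentially the same route as the paper: alternate between Theorem~\ref{exit} (to get out of the \textbf{repeat} loop) and fairness (to push $A$ through the $N_A$ writes of the \textbf{for} loop back into the \textbf{repeat} loop), iterated forever. The only cosmetic difference is that the paper also notes explicitly that from an arbitrary initial program counter it takes fewer than $N_A$ actions to reach the loop, whereas you start the induction from a clean exit; both handle the arbitrary starting configuration adequately.
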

\begin{proof} If $A$ is out of the loop, then it takes $A$ less than  
$N _A$ actions to enter the loop. When it is within the {\bf repeat} 
loop, then by Theorem~\ref{exit}, $A$ cannot stay infinitely 
long. $N_A$ actions after exiting the loop, $A$ writes into all 
its {\em Write} registers and reenters the {\bf repeat} loop.
\end{proof}

\subsection{Correctness Proof of the Read Checking Protocol}\label{correct} 
\begin{thm} \label{proof}
Let A and B be two neighbouring processes. After B executes 
its first action and after any writing in the register 
Write$_{AB}$, A can write in the register Write$_{AB}$ 
only if B allows it, i.e. Read$_{BA}=$ Write$_{AB}$ 
(see Definition \ref{def1}).
\end{thm}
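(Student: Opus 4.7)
The plan is to trace control flow of $A$ after the write event mentioned in the statement. Fix a write by $A$ of some value $v$ into $Write_{AB}$; afterwards $A$ enters the \textbf{repeat} loop of the protocol of Fig.~2, and the only way $A$ can subsequently perform another write into $Write_{AB}$ is to leave the loop through the \textbf{until}-test, then loop back via the outer \textbf{while}. So it suffices to show that at every instant from the end of the initial write to the instant of the next write of $Write_{AB}$, the relation $Read_{BA}=Write_{AB}$ has been (or still is) established.

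First I would observe that inside the \textbf{repeat} loop $A$ does not write to $Write_{AB}$ (this register appears only as a source of reads, not of writes, within the loop), and by atomicity only $A$ writes to $Write_{AB}$; hence $Write_{AB}$ stays equal to $v$ throughout the loop. Consequently every $s_B$ obtained by the statement $s_B \gets \textbf{read}(Write_{AB})$ inside the loop equals $v$. When $A$ finally passes the exit condition, in particular $val_B=s_B$ holds, so $val_B=v$; but $val_B$ was obtained by $A$ reading $Read_{B A}$, so at that read moment $Read_{BA}=v=Write_{AB}$, i.e. $B$ allowed $A$ to write (Definition~\ref{def1}).

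Next I would invoke Lemma~\ref{stop} to propagate this permission forward in time to the instant of $A$'s next write. Since we are past $B$'s first action, Lemma~\ref{wrong} rules out any wrong writing into $Read_{BA}$; therefore every subsequent change of $Read_{BA}$ is performed by $B$ inside a complete update, which reads the current $Write_{AB}$ (still $v$) and writes it into $Read_{BA}$, leaving $Read_{BA}=v$ untouched. By Lemma~\ref{stop}, the permission can be cancelled only by $A$'s writing into $Write_{AB}$, and $A$ is still inside the loop and has not yet performed such a write. So $Read_{BA}=Write_{AB}$ continues to hold up to and including the instant at which $A$ executes its next $\textbf{write}(Write_{AB_i},get_i)$ call, which is exactly the claim of the theorem.

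The main obstacle I expect is the timing gap between $A$'s read of $val_B$ from $Read_{BA}$ and $A$'s eventual rewrite of $Write_{AB}$: in principle $B$ could modify $Read_{BA}$ in between, and one must be sure no such modification destroys the equality. That gap is exactly what Lemmas~\ref{wrong} and~\ref{stop} are designed to bridge; the constancy of $Write_{AB}$ inside the \textbf{repeat} loop (which I emphasised above) is the local ingredient that makes those lemmas applicable here. Once all three facts are combined, the conclusion follows without further calculation.
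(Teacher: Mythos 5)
Your proof is correct and follows essentially the same route as the paper's: you use the constancy of $Write_{AB}$ inside the \textbf{repeat} loop, deduce from the exit test that $Read_{BA}=Write_{AB}$ held at the moment $val$ was read, and then invoke Lemmas~\ref{wrong} and~\ref{stop} to carry that permission forward to $A$'s next write. The only cosmetic difference is that the paper splits the last step into two explicit cases (no intervening update of $Read_{BA}$ versus intervening updates), which your uniform propagation argument subsumes.
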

\begin{proof} Process $B$ is the $i\/$th neighbour of $A$. 
Between each of its two writings, $A$ enters the {\bf repeat} 
loop and exits the loop. Once $A$ is within the loop, the register 
$Write_{AB}$ does not change value. The {\bf repeat} loop's 
code is such that when the loop is exited, the value of the local 
variable $s_i$ of $A$ and the value of the register $Write_{AB}$ 
are equal. In the loop, the local variable $r_i$ of $A$ takes the 
value of the register $Read_{AB}$. The value of the register 
$Read_{BA}$ may change after this assignment and before the loop 
is exited. Thus, when the loop is exited two distinct cases have 
to be considered:

$\bullet$\ \, No update of the register $Read_{BA}$ happens 
between the reading from that register and the loop exit. Then, 
$s_i=$ {\em Write}$_{AB}=val_i=$ {\em Read}$_{BA}$, and 
$B$ allows the writing of $A$.

$\bullet$\ \, Writings into the register $Read_{BA}$ happen 
between the reading from that register and the loop exit. 
However, the latter writings are performed within the context 
of updating. Hence, each time the value has changed, 
we have that {\em Read}$_{BA}=$ {\em Write}$_{AB}$ and, by 
Lemma~\ref{stop}, the equality holds while $A$ does not rewrite 
into the register {\em Write}$_{AB}$.
\end{proof}

After the writing of a value in the register $Write_{AB}$, 
the first primitive guarantees that $A$ will only write 
in the register $Write_{AB}$ if $B$ allows it. In the case 
when the value is new, $B$ must perform the action 
{\bf read($Write_{AB}$)} to allow the writing.

\subsection*{Summing up of the Results}
\begin{enumerate}
\item  {\bf The protocol is live:} every process is 
updating all its {\em Write} registers an infinite number 
of times.
\item  {\bf The protocol is correct:} no process can write 
distinct values twice in a row in its {\em Write} register 
without any previous reading from that register.
\end{enumerate}

\section {The Weak Rendezvous Protocol} \label{wrv}
In this section, we present a self-stabilizing {\em weak rendezvous} 
communication primitive.

Recall that The {\em rendezvous} mechanism (as defined in~\cite{hoa}) 
synchronizes communication in the link-register asynchronous 
model of distributed system: each $write$ or $read$ operation 
is performed in and from the same register. When Process $A$ writes 
a value in its register $Write_{AB}$, it cannot perform 
{\em any other action} until process $B$ has completed a 
$read$ operation from the register $Write_{AB}$.

\medskip
The {\em weak rendezvous} mechanism only requires that between 
two $write$ operations performed by a process $A$ in $Write_{AB}$, 
process $B$ performs at least one $read$ operation from $Write_{AB}$. 
Therefore, if $A$ writes a value $n$ consecutive times 
(even the same ones in each row) in the register $Write_{AB}$, 
the primitive guarantees that $B$ copies each of the $n$ values 
at least one time, once the system is stabilized.

The weak rendezvous mechanism is based upon the alternating bit
technique. After writing in its register $Write_{AB}$, process 
$A$ changes the value of the bit-register $Control_{AB}$. 
$A$ can write again in the register $Write_{AB}$ only after 
$B$ has copied the new value of $Control_{AB}$ into the 
register $CheckControl_{BA}$. And $B$ copies the value only 
after reading in the register $Write_{AB}$.

The liveness proof of the weak rendezvous protocol is similar 
to the proof of the read checking protocol.
The following Theorem~\ref{proof2} proves the correctness 
of the weak rendezvous protocol.
 
\begin{thm} \label{proof2}
Let A and B be two neighbouring processes. After B executes 
its first action and after the xth ($\ge 2$) writing in the 
register Write$_{AB}$, B reads the value from Write$_{AB}$ 
before the next writing in Write$_{AB}$.
\end{thm}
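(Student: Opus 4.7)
The plan is to mirror the correctness argument for the Read Checking protocol (Theorem~\ref{proof}), but with the control pair $(Control_{AB}, CheckControl_{BA})$ playing the role that the message pair $(Write_{AB}, Read_{BA})$ played there. Define an \emph{update} of $CheckControl_{BA}$ as the two-step sequence by $B$ consisting of \textbf{read}$(Write_{AB})$ immediately followed by \textbf{write}$(CheckControl_{BA}, Control_{AB})$. My first step is to establish analogues of Lemma~\ref{wrong} and Lemma~\ref{stop} for the control register: after $B$'s first action, every writing into $CheckControl_{BA}$ lies inside such an update, so the equality $CheckControl_{BA} = Control_{AB}$ can only be invalidated by a writing of $A$ into $Control_{AB}$, that is, by a toggle.

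Second, I would examine the state right after the $x$-th writing in $Write_{AB}$, with $x \ge 2$. The weak rendezvous code forces $A$'s very next action to be a toggle of $Control_{AB}$. Because $x \ge 2$, there is a previous writing in $Write_{AB}$, itself followed by a toggle and by $A$'s successful exit of the waiting loop on observing $CheckControl_{BA} = Control_{AB}$; so $A$ has read from $CheckControl_{BA}$ a specific value $c$, and the new toggle then sets $Control_{AB} = \bar{c}$. Since $A$ cannot modify $CheckControl_{BA}$, and (by the analogue of Lemma~\ref{stop}) a $B$-update can only restore the equality to the current value of $Control_{AB}$, right after the $x$-th toggle one has $CheckControl_{BA} = c \neq \bar{c} = Control_{AB}$.

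Third, I combine the two steps. Between its $x$-th and $(x+1)$-th writings in $Write_{AB}$, $A$ does not write into $Control_{AB}$, so the bit $Control_{AB}$ remains frozen at $\bar{c}$. For $A$ to leave the waiting loop and perform the $(x+1)$-th writing, $CheckControl_{BA}$ must actually change from $c$ to $\bar{c}$; by the analogue of Lemma~\ref{wrong}, this change is contained in an update of $B$, which is immediately preceded by \textbf{read}$(Write_{AB})$. Hence $B$ executes at least one \textbf{read}$(Write_{AB})$ strictly between the $x$-th and the $(x+1)$-th writings of $A$ into $Write_{AB}$.

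I expect the main obstacle to be the first step: carefully showing that a single first action of $B$ suffices to flush any partially executed update that $B$'s program counter might have been engaged in, so that from then on every write into $CheckControl_{BA}$ is genuinely preceded by a read of $Write_{AB}$. This is also the technical reason for the restriction $x \ge 2$ in the statement: for the very first writing, the initial values of $Control_{AB}$ and $CheckControl_{BA}$ may coincidentally agree and $A$ may proceed without $B$ ever completing a full update, whereas after one completed cycle the toggle enforces a genuine desynchronisation and the alternating-bit argument applies cleanly.
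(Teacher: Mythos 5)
Your proof follows essentially the same route as the paper's: establish, via analogues of Lemmas~\ref{wrong} and~\ref{stop} for the control registers, that $Control_{AB}=CheckControl_{BA}$ holds just before the $x$-th writing, note that the toggle breaks this equality, and conclude that $A$ cannot perform the $(x+1)$-th writing until $B$ restores the equality by an update whose head is a \textbf{read}$(Write_{AB})$ --- your version is in fact more explicit than the paper's four-line argument about which supporting lemmas are invoked and why $x\ge 2$ is required. The one subtlety neither you nor the paper addresses is the update that \emph{straddles} the $x$-th writing: $B$ may execute \textbf{read}$(Write_{AB})$ (obtaining the $(x-1)$-th value) just before $A$'s $x$-th write, then read $Control_{AB}$ only after the toggle and copy $\bar{c}$ into $CheckControl_{BA}$, thereby releasing $A$ without ever reading the $x$-th value; so the final inference that the read occurs \emph{strictly between} the two writings needs an additional argument (or a reordering of $B$'s read of the control bit before its read of the data register) that is missing from both proofs.
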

\begin{proof}
As shown in Theorem~\ref{proof}, we can establish
that before the $x\/$th writing in the register $Write_{AB}$,
$Control_{AB} = CheckControl_{BA}$. After the writing in 
the register $Write_{AB}$, $A$ changes the value in $Control_{AB}$ 
and enters the {\bf repeat} loop ($Control_{AB}\ne CheckControl_{BA}$).
$A$ stays within the loop as long as $B$ does not copy  
the value of $Control_{AB}$ into the register $CheckControl_{BA}$. 
Finally, $B$ copies the value only after reading in the register 
$Write_{AB}$.
\end{proof}

The weak rendezvous protocol maintains a weak scheduling of the 
communication between processes in the following sense. 
We call a {\em weak scheduling} of the communication between 
process $A$ and all its $N_A$ neighbours the property that 
$A$ can write twice into its registers {\em Write}$_{AB_i}$, 
only whenever all the $B_i$'s did read from the register 
{\em Write}$_{AB_i}$ in the meantime $(1\le i\le N_A)$. 

\noindent \hrulefill\-\hrulefill

\smallskip
\noindent {\bf constant}\ \, $N_A $\hspace{.4cm} : the number of 
neighbours of $A$ ; \\
\noindent {\bf var}\ \,\, $r_i$\hspace{1.5cm} : message sent from 
the $i\/$th neighbour of $A$ ; \\
\hspace*{.8cm} $b_i $\hspace{1.5cm} : alternate bit sent from the 
$i\/$th neighbour of $A$ ; \\
\hspace*{.8cm} $c_i $\hspace{1.5cm} : alternate bit sent from A to 
the $i\/$th neighbour of $A$ ; \\
\hspace*{.8cm} $l_i$\hspace{1.5cm} : value of the last alternate bit
sent from $A$ and read by the $i\/$th neighbour of $A$;\\

\newpage
\noindent {\bf while} {\em true} {\bf do} \\
\hspace*{.5cm} {\bf for} $\;i=1\;$ {\bf to} $\;N_A\;$ {\bf do} \\
\hspace*{2cm} {\bf write}$(Write_{AB_i},get_i)$ ; \\
\hspace*{2cm} $c_i \gets$ {\bf read}$(Control_{AB_i})$ ; \\
\hspace*{2cm} {\bf write}$(Control_{AB_i},(c_i+1)\bmod 2)$ ; \\
\hspace*{.5cm} {\bf endfor} \\
\hspace*{.5cm} {\bf repeat} \\
\hspace*{1cm} {\bf for} $\;i=1\;$ {\bf to} $\;N_A\;$ {\bf do} \\
\hspace*{2cm} $r_i\gets$ {\bf read}$(Write_{B_iA})$ ; \\
\hspace*{2cm} $b_i\gets$ {\bf read}$(Control_{B_iA})$ ; \\
\hspace*{2cm} {\bf write}$(CheckControl_{AB_i},b_i)$ ; \\
\hspace*{2cm} $c_i\gets$ {\bf read}$(Control_{AB_i})$ ; \\
\hspace*{2cm} $l_i\gets$ {\bf read}$(CheckControl_{B_iA})$ ; \\
\hspace*{1cm} {\bf endfor} \\
\hspace*{.5cm} {\bf until}\ \, \,( $\forall i\in [1,N_A]\ \; c_i = l_i$ ) \\
{\bf endwhile}
\smallskip
\begin{center}
{\bf Fig.~3.}\ {\em The weak rendezvous protocol for $A$.}
\end{center}
\vspace{-.3cm}
\noindent \hrulefill\-\hrulefill

\section {The Quasi Rendezvous Protocol} \label{qrv} 
In this section, we present a self-stabilizing {\em quasi rendezvous} 
communication primitive. A close idea may be found in~\cite{anh}, 
where the authors also present a self-stabilizing data link protocol 
under read/write atomicity such that, between two $write$ 
operations in the register, there is only one $read$ operation from 
that register. (See our remarks in section~\ref{related}.)

\medskip 
The {\em quasi rendezvous} mechanism requires that between two 
$write$ operations performed by the process $A$ in $Write_{AB}$, 
the process $B$ cannot perform but one and only one {\em read} 
operation from $Write_{AB}$. Therefore, if $A$ writes $n$ 
consecutive times the same value (possibly the same one in each row) 
in the register $Write_{AB}$, the primitive guarantees that $B$ 
will copie each of the $n$ values {\em exactly one time}, once the 
system is stabilized. 
 
The quasi rendezvous mechanism is based upon the alternating bit 
technique. After reading from the register $Write_{AB}$, the 
process $B$ copies the value of the bit-register $Control_{AB}$ 
into $CheckControl_{BA}$. Now, $B$ can read again from the register 
$Write_{AB}$ only after $A$ has changed the value of $Control_{AB}$. 
And $A$ changes that value only after writing in the register 
$Write_{AB}$. 

\medskip
\noindent \hrulefill\-\hrulefill 
 
\smallskip 
\noindent {\bf constant}\ \, $N_A $\hspace{.4cm} : the number of 
neighbours of $A$ ; \\ 
\noindent {\bf var}\ \,\, $r_i$\hspace{1.5cm} : message sent from 
the $i\/$th neighbour of $A$ ; \\ 
\hspace*{.8cm} $b_i $\hspace{1.5cm} : alternate bit sent from the 
$i\/$th neighbour of $A$ ; \\ 
\hspace*{.8cm} $c_i $\hspace{1.5cm} : alternate bit sent from A to 
the $i\/$th neighbour of $A$ ; \\ 
\hspace*{.8cm} $l_i$\hspace{1.5cm} : value of the last alternate bit 
sent from $A$ and read by the $i\/$th neighbour of $A$;\\ 
\hspace*{.8cm} $d_i$\hspace{1.5cm} : value of the last alternate bit 
sent from the $i\/$th neighbour of $A$  and read by $A$\\ 
 
\medskip 
\noindent {\bf while} {\em true} {\bf do} \\ 
\hspace*{.5cm} {\bf for} $\;i=1\;$ {\bf to} $\;N_A\;$ {\bf do} \\ 
\hspace*{2cm} {\bf write}$(Write_{AB_i},get_i)$ ; \\ 
\hspace*{2cm} $c_i \gets$ {\bf read}$(Control_{AB_i})$ ; \\ 
\hspace*{2cm} {\bf write}$(Control_{AB_i},(c_i+1)\bmod 2)$ ; \\ 
\hspace*{.5cm} {\bf endfor} \\ 
\hspace*{.5cm} {\bf repeat} \\ 
\hspace*{1cm} {\bf for} $\;i=1\;$ {\bf to} $\;N_A\;$ {\bf do} \\ 
\hspace*{2cm} $b_i\gets$ {\bf read}$(Control_{B_iA})$ ; \\ 
\hspace*{2cm} $d_i\gets$ {\bf read}$(CheckControl_{AB_i})$ ; \\ 
\hspace*{2cm} {\bf if} $b_i \neq d_i$ {\bf then} \\ 
\hspace*{3cm} $r_i\gets$ {\bf read}$(Write_{B_iA})$ ; \\ 
\hspace*{3cm} {\bf write}$(CheckControl_{AB_i},b_i)$ ; \\ 
\hspace*{2cm} {\bf endif} \\ 
\hspace*{2cm} $c_i\gets$ {\bf read}$(Control_{AB_i})$ ; \\ 
\hspace*{2cm} $l_i\gets$ {\bf read}$(CheckControl_{B_iA})$ ; \\ 
\hspace*{1cm} {\bf endfor} \\ 
\hspace*{.5cm} {\bf until}\ \, \,( $\forall i\in [1,N_A]\ \; c_i = l_i$ ) \\ 
{\bf endwhile} 
\smallskip 
\begin{center} 
{\bf Fig.~4-.}\ {\em The quasi rendezvous protocol for $A$.} 
\end{center} 
\vspace{-.3cm} 
\noindent \hrulefill\-\hrulefill 

\medskip 
The liveness proof of the quasi rendezvous protocol is similar 
to the proof of the read checking protocol. 
 
\begin{defi} \label{def2} 
Let A and B be two neighbouring processes. B is said to allow A 
to write iff CheckControl$_{BA}=$ Control$_{AB}$. 
\end{defi} 
 
\begin{defi} 
Let A and B be two neighbouring processes. The full reading 
of register Write$_{AB}$ is completed by the sequence of the 
four following actions performed by B: 
\newline 
$b\gets$ {\bf read}$(Control_{BA})$~; 
$d\gets$ {\bf read}$(CheckControl_{AB})$~; 
{\bf if} $b \neq d$ {\bf then} 
$\{r\gets$ {\bf read}$(Write_{BA})$~; 
{\bf write}$(CheckControl_{AB},b)~;\}$. 
\end{defi} 
 
\begin{defi} 
Let A and B be two neighbouring processes. The full writing 
of register Write$_{AB}$ is completed the sequence of the three 
following actions performed by A: 
\newline 
{\bf write}$(Write_{AB},get)$~; 
$c \gets$ {\bf read}$(Control_{AB})$~; 
{\bf write}$(Control_{AB},(c+1)\bmod 2)$~; 
\end{defi} 
 
\begin{lem} \label{allow2} 
Let A be a process with its program counter in the {\bf repeat} loop 
and let B be a neighbour of A. Whatever the current configuration 
and the execution, the system of processes executing the protocol 
either eventually reaches a configuration in which B allows A to 
write, or A exits the {\bf repeat} loop. 
\end{lem}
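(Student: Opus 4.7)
The plan is to mirror the argument of Lemma~\ref{allow}, proceeding by contradiction: assume that $A$ remains inside its \textbf{repeat} loop forever and that $B$ never allows $A$ to write, i.e., $CheckControl_{BA}\ne Control_{AB}$ throughout the execution. Since $A$ stays inside its \textbf{repeat} loop, $A$ performs no write to $Control_{AB}$ (those writes live only in the outer \textbf{for}-loop), hence $Control_{AB}$ keeps some fixed value $v$ throughout. By fairness $B$ performs infinitely many actions, so it is enough to bound the number of $B$'s actions after which $B$ reaches the piece of code that inspects the pair $(Control_{AB},CheckControl_{BA})$.

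First I would observe that, whatever $B$'s current program counter, within at most $3N_B$ actions $B$ enters (or re-enters) its own \textbf{repeat} loop, and then within at most $6N_B$ further actions $B$ executes the iteration of its inner \textbf{for}-loop corresponding to its neighbour $A$---that is, a full reading of $Write_{AB}$ in the sense defined just above this lemma. During that iteration $B$ reads $b\gets Control_{AB}$ (hence $b=v$) and then $d\gets CheckControl_{BA}$. Two cases arise. If $b=d$, then at the instant of the second read $CheckControl_{BA}=d=v=Control_{AB}$, so $B$ already allows $A$ in that configuration, contradicting our hypothesis. If $b\ne d$, then $B$ executes the write $CheckControl_{BA}\gets b=v$; since $A$ cannot modify $Control_{AB}$ while stuck in the \textbf{repeat} loop and no other process can write $CheckControl_{BA}$, immediately after this atomic write $CheckControl_{BA}=v=Control_{AB}$, again yielding the forbidden ``$B$ allows $A$''.

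The step I expect to require some care is the very first action of $B$: the initial configuration may leave $B$'s program counter in the middle of a full reading, permitting a single \emph{wrong writing} of $CheckControl_{BA}$ analogous to the one handled in Lemma~\ref{wrong}. This is harmless. Either that wrong write happens to store~$v$, in which case the conclusion holds at once; or it does not, in which case $CheckControl_{BA}$ simply retains a stale value and the next full iteration of $B$'s inner \textbf{for}-loop triggers the clean argument above. Since only one such wrong writing can occur, the action bounds above are inflated by at most an additive constant, and the desired contradiction---hence the lemma---follows.
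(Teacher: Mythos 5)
Your proof is correct and follows essentially the same route as the paper, which proves Lemma~\ref{allow2} only by reference to the argument for Lemma~\ref{allow}: contradiction, fairness giving $B$ infinitely many actions, an $O(N_B)$ bound on the actions until $B$ reaches the full reading for neighbour $A$, and the initial wrong-writing dealt with separately. Your explicit case analysis on the conditional branch ($b=d$ already yields $CheckControl_{BA}=Control_{AB}$ since $Control_{AB}$ is frozen at $v$; $b\ne d$ forces the write of $v$) is exactly the one new wrinkle the quasi rendezvous protocol introduces, and you handle it correctly.
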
 
 
\begin{lem} \label{wrong2} 
After executing its first three actions, no process can perform an 
incomplete reading or writing. 
\end{lem}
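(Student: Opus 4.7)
The plan is to mimic the structure of Lemma~\ref{wrong}, adapting it to the longer atomic sequences appearing in the quasi rendezvous code. A full writing of $Write_{AB}$ is a block of three consecutive atomic actions (\textbf{write}$(Write_{AB},\cdot)$; \textbf{read}$(Control_{AB})$; \textbf{write}$(Control_{AB},\cdot)$); a full reading of $Write_{AB}$ by $B$ is a block of at most four atomic actions (\textbf{read}$(Control_{BA})$; \textbf{read}$(CheckControl_{AB})$; and, conditionally, \textbf{read}$(Write_{BA})$; \textbf{write}$(CheckControl_{AB},b)$). An incomplete reading/writing can arise only when the initial program counter of a process is placed strictly inside one of these blocks, so that the subsequent atomic actions executed from this position are a proper suffix of a full sequence rather than a full sequence in its own right.

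First I would enumerate the possible offending initial positions of the program counter. For the writing block the PC can point to the second or third action, i.e.\ at most two suffix actions before the block is finished. For the reading block the PC can point to the second, third, or fourth action, giving at most three suffix actions. Once the process executes all the actions of such a suffix, its program counter lands on a line that is itself the head of a fresh block (either the next iteration of the \textbf{for} loop in the writing phase, the outer loop structure, or the start of the next neighbour's reading block inside the \textbf{repeat} loop). From that point on, every write into $Control$ or $CheckControl$ and every read of $Write$ is preceded by exactly the actions required for it to belong to a complete block.

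The bookkeeping to carry out is therefore: verify by inspection of Fig.~4 that every entry point into the code that lies in the middle of a full reading/writing block leaves the block within at most three atomic actions, and that the exit of such a suffix is always a legitimate block boundary. Combining the two bounds, after at most three atomic actions the process is guaranteed to be at the start of some complete sequence or outside any such sequence. Hence no incomplete reading or writing can occur after the first three actions.

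The main obstacle I expect is the careful case analysis of initial PC positions, in particular handling the conditional branch in the reading block: if the PC is set right before \textbf{write}$(CheckControl_{AB},b)$, this write may fire using a stale local value of $b$, producing exactly one incomplete reading; I must confirm that this is accounted for within the three-action window and that no further such anomaly can be produced later, analogously to the single wrong writing identified in Lemma~\ref{wrong}.
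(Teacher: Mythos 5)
Your proposal is correct and follows essentially the same route as the paper, which simply states that the proof of Lemma~\ref{wrong2} is similar to that of Lemma~\ref{wrong}: the only possible incomplete reading or writing arises from an initial program counter placed strictly inside a full-writing block (three actions) or full-reading block (four actions), and the offending suffix is exhausted within at most three atomic actions, after which every block is complete. Your explicit case analysis of the initial PC positions, including the conditional branch, is exactly the bookkeeping the paper leaves implicit.
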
 
 
\begin{lem} \label{stop2} 
Let A and B be two neighbouring processes. After B and A execute 
their first three actions, if B allows A to write, then only 
the complete writing of A in its register Write$_{AB}$ may be able 
to cancel that permission. 
\end{lem}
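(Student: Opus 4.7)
The plan is to mirror the proof of Lemma~\ref{stop}, adapted to the quasi rendezvous setting. By Definition~\ref{def2}, ``$B$ allows $A$ to write'' is the equality $CheckControl_{BA} = Control_{AB}$, so any cancellation of the permission must be caused by a write that modifies one of these two registers. Since only $A$ can write $Control_{AB}$ and only $B$ can write $CheckControl_{BA}$, I would split the argument into two cases according to who performs the offending write, and invoke Lemma~\ref{wrong2} to guarantee that, after the first three actions of $A$ and $B$, any such write by $A$ is embedded in a complete writing of $Write_{AB}$ and any such write by $B$ is embedded in a full reading of $Write_{AB}$.

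The first case, where $A$ writes $Control_{AB}$, is immediate: the write is by definition part of a complete writing of $A$, which is exactly the mode of cancellation allowed by the lemma.

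The second case is the core of the argument. A write by $B$ to $CheckControl_{BA}$ occurs only inside the ``if $b \neq d$'' branch of a full reading, where $b$ was sampled earlier from $Control_{AB}$ and $d$ from $CheckControl_{BA}$. I would show that if permission holds at the moment $B$ reads $b$ and $A$ does not touch $Control_{AB}$ throughout the rest of the full reading, then $b = Control_{AB} = CheckControl_{BA}$; because $CheckControl_{BA}$ is modified only by $B$ and the candidate write has not yet been executed, the value $d$ subsequently read by $B$ equals $b$, the test $b \neq d$ fails, no write into $CheckControl_{BA}$ occurs, and the permission is preserved.

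The main obstacle is the subtle interleaving in which $A$ performs a complete writing during $B$'s full reading: $Control_{AB}$ may then transiently come to match $CheckControl_{BA}$, yet $B$'s later write with a stale $b$ can still break the match. I would resolve this by observing that the test $b \neq d$ can only succeed, and hence $B$'s write can only actually change $CheckControl_{BA}$ and cancel the permission, when $Control_{AB}$ changed between $B$'s two reads; by Lemma~\ref{wrong2} any such change is itself part of a complete writing of $A$. Thus in every execution the cancellation of permission is caused by a complete writing of $A$ in $Write_{AB}$, as claimed.
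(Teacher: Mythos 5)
Your proposal is correct and follows the route the paper intends: the paper proves Lemma~\ref{stop2} only by declaring it ``similar to'' the proof of Lemma~\ref{stop}, and your argument is precisely that adaptation --- reduce any cancellation to a write into $Control_{AB}$ or $CheckControl_{BA}$, use Lemma~\ref{wrong2} to embed such writes in complete writings/full readings, and show that a full reading preserves the permission unless $A$ flips $Control_{AB}$ in the meantime. Your explicit handling of the stale-$b$ interleaving is in fact more careful than anything written in the paper, and it mirrors the ``unless $A$ writes \ldots{} during the updating process or after the last update'' clause in the proof of Lemma~\ref{stop}.
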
 
\begin{proof}
The proof of the three above lemmas (\ref{allow2}, \ref{wrong2} and 
\ref{stop2}) is similar to the proof of Lemma~\ref{allow}, 
Lemma~\ref{wrong} and Lemma~\ref{stop}, respectively.
\end{proof}

\begin{thm} \label{exit2} 
Let A be a process. Whatever the execution, the system of processes 
which performs the protocol reaches a configuration in which 
A is not within the {\em \bf repeat} loop anymore. 
\end{thm}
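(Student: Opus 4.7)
The plan is to mirror the proof of Theorem~\ref{exit} almost verbatim, substituting the quasi rendezvous lemmas (\ref{allow2}, \ref{wrong2}, \ref{stop2}) for their read checking counterparts. I would argue by contradiction: assume $A$ remains inside the \textbf{repeat} loop forever. Then $A$ never executes any of the three atomic actions that constitute a full writing of any register $Write_{AB_i}$; in particular, each $Control_{AB_i}$ retains a fixed value for the rest of the execution.

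The second step is to show that $A$ tests the loop exit condition infinitely often. Each pass through the inner \textbf{for} loop executes at most a constant number of atomic actions per neighbour (at most five: two reads, possibly a read, a write, then two more reads), independently of whether the \textbf{if} branch is taken. Combined with the basic liveness result that every process performs infinitely many actions, this gives infinitely many evaluations of the exit test. Moreover, after a finite prefix of the execution, every neighbour of $A$ has performed its first three actions, so that Lemmas~\ref{wrong2} and~\ref{stop2} apply uniformly to all neighbours from then on; I would restrict attention to tests occurring after this point and after each neighbour has executed at least one subsequent action.

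The core combinatorial step is then the dichotomy already used for Theorem~\ref{exit}. If at some qualifying test every neighbour $B_i$ allows $A$ to write, i.e.\ $CheckControl_{B_iA}=Control_{AB_i}$, then by Lemma~\ref{stop2} no such permission can be revoked before $A$ completes a full writing; but $A$ is assumed never to write again, so all permissions persist through the next traversal of the inner loop, during which $A$ refreshes $c_i$ and $l_i$. The following test therefore succeeds and $A$ exits the loop, contradicting the assumption. Hence at every qualifying test at least one neighbour fails to allow $A$ to write; since once granted a permission is never withdrawn (Lemma~\ref{stop2} again, together with the absence of full writings by $A$), there must be a fixed neighbour $B$ that never grants permission to $A$. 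This directly contradicts Lemma~\ref{allow2}, which forces $B$ either eventually to allow $A$ to write or to make $A$ exit the loop.

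The main obstacle, as in the Read Checking case, is disentangling the initial transient in which an inherited program counter may cause an incomplete full read or full write. This is precisely what Lemma~\ref{wrong2} rules out after three actions per process, so the only care required is to wait until every relevant neighbour has executed enough actions before invoking Lemmas~\ref{wrong2} and~\ref{stop2}; the rest of the argument is a routine transcription of the proof of Theorem~\ref{exit}.
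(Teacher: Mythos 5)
Your proposal is correct and follows exactly the route the paper itself indicates: the paper's own ``proof'' of Theorem~\ref{exit2} is a one-line sketch stating that the argument is by contradiction and similar to Theorem~\ref{exit}, and you carry out precisely that transcription, replacing Lemmas~\ref{allow}, \ref{wrong}, \ref{stop} by Lemmas~\ref{allow2}, \ref{wrong2}, \ref{stop2} and handling the initial transient via the ``first three actions'' proviso (noting only that Lemma~\ref{stop2} also requires $A$, not just its neighbours, to have executed its first three actions, which holds after a finite prefix). The small miscount of atomic actions per neighbour per iteration is immaterial, since all that matters is that the bound is finite.
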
 
\begin{sketchproof}
The proof is by contradiction and it is similar to the proof of 
theorem~\ref{exit}. 
\end{sketchproof}

\begin{cor} 
Let A be a process. Whatever the execution, A writes an infinite 
number of times into all its Write registers. 
\end{cor}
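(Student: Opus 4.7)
The plan is to combine Theorem~\ref{exit2} (which guarantees that $A$ never stays forever inside the \textbf{repeat} loop) with the liveness property that every process performs infinitely many actions, and then argue by a simple counting argument on the code structure of Fig.~4.

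First, I would recall (from the analogue of Lemma~1.1, which applies to any protocol under the same model assumptions) that whatever the execution, $A$ performs an infinite number of atomic actions. Second, I would split the behaviour of $A$ into two regimes: either the program counter of $A$ lies inside the \textbf{repeat} loop, or it lies outside it. If $A$ is outside the loop, then a bounded number of actions (at most $3N_A$, corresponding to the three atomic steps \textbf{write}$(Write_{AB_i},get_i)$, $c_i\gets$~\textbf{read}$(Control_{AB_i})$, \textbf{write}$(Control_{AB_i},(c_i+1)\bmod 2)$ for each of the $N_A$ neighbours) suffice for $A$ to complete the outer \textbf{for} loop and, in particular, to write in every one of its registers $Write_{AB_i}$ before entering the \textbf{repeat} loop.

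Third, if $A$ is inside the \textbf{repeat} loop, Theorem~\ref{exit2} guarantees that after finitely many actions $A$ must exit the loop. Combining this with the previous case, from any configuration $A$ reaches, after a bounded number of its own actions $A$ will write at least once into each of its registers $Write_{AB_i}$. Since by liveness $A$ performs infinitely many actions, this bounded cycle (enter the outer \textbf{for} loop, write into all $Write_{AB_i}$, enter and then exit the \textbf{repeat} loop) is repeated infinitely often, which yields the conclusion.

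I do not expect a serious obstacle here: all the real work has already been packaged into Theorem~\ref{exit2} and into the liveness statement for individual processes. The only thing to take care of is to verify that, with the modified code of Fig.~4 (in particular the added read/write on $Control_{AB_i}$ in the outer \textbf{for} loop, and the conditional block inside the \textbf{repeat} loop), the number of actions between two consecutive executions of \textbf{write}$(Write_{AB_i},get_i)$ is indeed finite in every fair execution; this follows because neither the outer \textbf{for} loop nor (by Theorem~\ref{exit2}) the \textbf{repeat} loop can detain $A$ for infinitely many actions.
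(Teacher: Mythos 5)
Your proposal is correct and follows essentially the same route as the paper: the paper leaves this corollary unproved but its proof of the identical corollary for the Read Checking protocol is exactly your argument (outside the loop $A$ reaches the writes in a bounded number of actions, inside the loop Theorem~\ref{exit2} forces an exit, and liveness makes the cycle repeat forever). Your only extra care --- adjusting the action count to $3N_A$ for the added $Control$ operations in Fig.~4 --- is a harmless refinement of the same argument.
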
 
The following Theorems \ref{proof3.a} and \ref{proof3.b} prove the 
correctness of the quasi rendezvous protocol. 
 
\begin{thm} \label{proof3.a} 
Let A and B be two neighbouring processes. After A and B execute 
their first  three actions and after the xth ($\ge 2$) writing 
in the register Write$_{AB}$, B reads the value from Write$_{AB}$ 
before the next writing in Write$_{AB}$ can take place. 
\end{thm}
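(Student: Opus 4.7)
The plan is to mirror the structure of Theorem~\ref{proof2}, adapted to the extra guard $b_i \neq d_i$ that distinguishes the quasi rendezvous protocol from the weak rendezvous one. First, I would invoke the analogue of Theorem~\ref{proof} together with Lemma~\ref{stop2}: immediately before the $x$th writing into $Write_{AB}$, process $A$ must have just exited the {\bf repeat} loop, so its loop exit condition held, and in particular $CheckControl_{BA}=Control_{AB}$; that is, $B$ allowed $A$ to write in the sense of Definition~\ref{def2}.

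Next, I would track what happens during the $x$th full writing. By the code, $A$ executes {\bf write}$(Write_{AB},get)$, then $c\gets$~{\bf read}$(Control_{AB})$, then {\bf write}$(Control_{AB},(c+1)\bmod 2)$. This flips the bit, so that immediately after the full writing, $Control_{AB}\neq CheckControl_{BA}$. $A$ then re-enters the {\bf repeat} loop, and its exit condition $c_i=l_i$ is false for the index corresponding to $B$. By Lemma~\ref{stop2}, only a full writing by $A$ can cancel the permission; and since $A$ stays in the loop, $A$ cannot write again into $Write_{AB}$ until $CheckControl_{BA}$ is updated to match the new value of $Control_{AB}$.

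The decisive step is to argue that such an update of $CheckControl_{BA}$ can only happen via $B$'s full-reading sequence. By Lemma~\ref{wrong2}, once $B$ has executed its first three actions, every write of $B$ into $CheckControl_{BA}$ occurs inside the {\bf if}~$b_i\neq d_i$~{\bf then} branch of $B$'s code, and in that branch the assignment {\bf write}$(CheckControl_{BA},b_i)$ is strictly preceded by $r_i\gets$~{\bf read}$(Write_{AB})$. Hence, between the $x$th and $(x{+}1)$st writings in $Write_{AB}$, $B$ must perform at least one {\bf read}$(Write_{AB})$, which is the claim.

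The main obstacle, as in the weak rendezvous case, is ruling out that $CheckControl_{BA}$ could ``accidentally'' take on the new value of $Control_{AB}$ by some residual action of $B$ that bypasses the guarded branch — for instance a write to $CheckControl_{BA}$ initiated before $B$ had executed its first three actions, or a stale read of $Control_{BA}$ already matching the post-flip value by coincidence. This is exactly what Lemma~\ref{wrong2} is designed to exclude after the first three actions of $B$, combined with the fact that $b_i$ is freshly read from $Control_{BA}$ immediately before the guard test, so the ``if'' truly reflects the current discrepancy between the two bits. Once this is in place, the remainder is just reading off the control flow of Figure~4.
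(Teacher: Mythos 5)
Your proposal follows essentially the same route as the paper's own proof: establish $Control_{AB}=CheckControl_{BA}$ before the $x$th writing (via the argument of Theorem~\ref{proof}), observe that the full writing flips $Control_{AB}$ so that $A$ is held in the {\bf repeat} loop until $B$ copies the new bit into $CheckControl_{BA}$, and note that this copy occurs only inside the guarded branch, hence only after {\bf read}$(Write_{AB})$. Your treatment is in fact slightly more careful than the paper's (explicitly invoking Lemma~\ref{wrong2} to exclude residual writes into $CheckControl_{BA}$); the only blemish is a notational slip where you write $Control_{BA}$ for the register $Control_{AB}$ that $B$ reads its $b$-bit from.
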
 
\begin{proof} 
We can establish that before the $x\/$th writing in the register 
$Write_{AB}$, $Control_{AB}=CheckControl_{BA}$. After writing into 
the register $Write_{AB}$, $A$ changes the value in $Control_{AB}$ 
and enters the {\bf repeat} loop ($Control_{AB}\ne CheckControl_{BA}$). 
$A$ stays within the loop as long as $B$ does not copy 
the value of $Control_{AB}$ into the register $CheckControl_{BA}$. 
Finally, $B$ copies the value only after reading from the register 
$Write_{AB}$. 
\end{proof} 
 
\begin{thm} \label{proof3.b} 
Let A and B be two neighbouring processes. After A and B execute 
their first three actions and after B reads from Write$_{AB}$, 
A performs a complete writing in Write$_{AB}$ before the next reading 
from Write$_{AB}$. 
\end{thm}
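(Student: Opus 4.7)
The approach will mirror and dualise the proof of Theorem~\ref{proof3.a}: there one tracked how $A$ cannot re-write until $B$ acknowledges by equalising $CheckControl_{BA}$ with $Control_{AB}$; here I need the reverse, namely that $B$ cannot re-read $Write_{AB}$ until $A$ has performed a fresh complete writing. Since $B$'s reading of $Write_{AB}$ is triggered only when the guard $Control_{AB} \ne CheckControl_{BA}$ holds, the plan is to argue that, once the system has executed the first three actions on each side, only a complete writing by $A$ is able to restore that inequality after a reading.

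First I would verify the post-state of a reading. By the definition of full reading, the last action $B$ performs is $\mathbf{write}(CheckControl_{BA},b)$, where $b$ is the value just read from $Control_{AB}$. Hence immediately after the reading we have $CheckControl_{BA} = Control_{AB}$, which by Definition~\ref{def2} means that $B$ allows $A$ to write. Second, I would inspect $B$'s repeat loop to confirm that $B$'s only write into $CheckControl_{BA}$ occurs inside the guarded branch of a full reading. Consequently, from the completion of one reading up to the guard evaluation of any subsequent iteration, $CheckControl_{BA}$ is untouched, so the only way the guard $Control_{AB} \ne CheckControl_{BA}$ can be re-enabled is for $A$ to flip $Control_{AB}$.

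Third, I would invoke Lemma~\ref{wrong2}: after $A$'s first three actions, every write of $A$ into $Control_{AB}$ sits inside a complete writing, whose prescribed order is $\mathbf{write}(Write_{AB},get)$, then the read of $Control_{AB}$, then the flip of $Control_{AB}$. Therefore any flip of $Control_{AB}$ that re-enables $B$'s guard is necessarily preceded, in the same triple, by $\mathbf{write}(Write_{AB},get)$, and by the time $B$ reads $Write_{AB}$ again, $A$ has completed a full writing of $Write_{AB}$, which is exactly the claim.

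The main obstacle I foresee lies in Step~2: one must rule out that $CheckControl_{BA}$ is modified by a partial, stale update of $B$ surviving from the arbitrary initial configuration, for example a leftover $\mathbf{write}(CheckControl_{BA},\cdot)$ executed outside the guarded branch. This is precisely the pathology that Lemma~\ref{wrong2} excludes once both processes have performed their first three actions, and it is the reason the theorem's hypothesis is stated on $A$ \emph{and} $B$. The symmetric subtlety on $A$'s side — that the intermediate $\mathbf{read}(Control_{AB})$ of a complete writing could be mistaken for an extra action — is harmless since reads never modify the registers that govern the guard.
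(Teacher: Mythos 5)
Your argument is correct and follows essentially the same route as the paper's proof: after a full reading $B$ sets $CheckControl_{BA}$ to the value read from $Control_{AB}$, only $A$'s flip of $Control_{AB}$ can re-enable the guard $b\ne d$, and (by Lemma~\ref{wrong2}) after the first three actions any such flip occurs only as the last step of a complete writing, hence is preceded by $\mathbf{write}(Write_{AB},get)$. Your explicit appeal to Lemma~\ref{wrong2} to exclude stale partial updates is exactly the point the paper makes more tersely, so nothing further is needed.
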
 
\begin{proof} 
Before the reading from $Write_{AB}$, $Control_{AB}\ne CheckControl_{BA}$. 
After the reading from the register $Write_{AB}$, $B$ changes the value 
in $CheckControl_{BA}$ Now, $B$ does not change the value in 
$CheckControl_{BA}$ ($B$ does not read from the register $Write_{AB}$) 
as long as $A$ does not change the value in $Control_{AB}$. After the 
first three actions of $A$, changing the value in $Control_{AB}$ is made 
after $A$'s writing in $Write_{AB}$. 
\end{proof} 
 
The quasi rendezvous protocol maintains a scheduling of the communications 
between processes in the following sense. We call a {\em scheduling} 
of communications between process $A$ and all its $N_A$ neighbours 
the property that $A$ can write twice into its registers 
{\em Write}$_{AB_i}$, only whenever each of the $B_i$'s performed 
{\em one unique} reading from the register {\em Write}$_{AB_i}$ 
in the meantime $(1\le i\le N_A)$. 

\section{Concluding Remarks}
The paper presents three very basic general protocols for the design 
of fair and reliable self-stabilizing communication primitives. 
Both protocols work in arbitrary networks and also ensure minimal 
scheduling properties, whatever the initial configuration of the 
system of processes and the activations by the scheduler. In particular, 
the last protocol entails the mechanism of a ``quasi rendezvous'', 
which proves useful in more involved self-stabilizing protocols.

Each primitive can actually be used as a ``black box'' by a separate 
protocol, handling the procedures in more involved self-stabilizing 
algorithms. Thus, the protocols may be modified according to the
designer's will and needs: e.g., in specific topologies of networks 
a weak scheduling of communications may impose fewer neighbours 
to read from the registers. For example, with only one neighbour, 
a point to point self-stabilizing quasi rendezvous mechanism may 
also be completed. Along the same lines, the protocols also simulate 
reliable self-stabilizing message-passing in asynchronous distributed 
systems.

Although the paper does not concern itself with complexity measures, 
it is worth mentioning that when time is measured by some appropriately 
defined round complexity, the stabilization time of the read checking 
protocol is $O(1)$.

\end{document}